\newcommand{\remove}[1]{}
\def\F{\mathbb{F}}
\def\1{\mathbb{1}}
\def\R{\mathbb{R}}
\def\N{\mathbb{N}}
\newtheorem{remark}{Remark}
\newtheorem{thm}{Theorem}
\newtheorem{definition}[thm]{Definition}
\newtheorem{cor}{Corollary}
\newtheorem{question}{Question}
\newtheorem{proposition}[thm]{Proposition}
\begin{document}
\title{High Dimensional Expanders and Property Testing}
\author{Tali Kaufman \thanks {Bar-Ilan University, ISRAEL. Email: \texttt{kaufmant@mit.edu}.
Research supported in part by the Alon Fellowship, and by an EU IRG grant.}
 \and Alexander Lubotzky \thanks {Hebrew University, ISRAEL. Email: \texttt{alexlub@math.huji.ac.il}.
Research supported in part by the ERC and by the Israel Science Foundation.}}
\maketitle
\begin{abstract}
We show that the high dimensional expansion property as defined by Gromov, Linial and Meshulam, for simplicial complexes is a form of testability.
Namely, a simplicial complex is a high dimensional expander iff a suitable property is testable.
Using this connection, we derive several testability results.
\end{abstract}

\section{Introduction}
\subsection{High dimensional expanders}
Expander graphs have been playing an important role in computer science in the last few decades (see~\cite{HLW}) and more recently also in pure mathematics (see~\cite{Lub1}). In recent years a high dimensional theory of expanders is starting to emerge (see~\cite{Lub2} and the references therein). It is not even clear what is the "right" definition of expanders for simplicial complexes of dimension greater equal $2$.
But, two essentially equivalent definitions were given in two seminal works: One by Linial and Meshulam~\cite{LM} (see also~\cite{MW}) whose motivation was to study the (homological) connectivity of random complexes, as a first step toward developing a higher dimensional version of the Erdos-R\`{e}yni theory of random graphs. The second is by Gromov (see~\cite{Gromov1}) whose motivation was the study of fibers and overlapping properties of maps between complexes and manifolds. These two very different motivations led to a very similar definition, which we now give. We give a version that combines the two and which is most convenient for our needs. The homological/cohomological notions will be defined and explained in details in Section~\ref{sec:defs}.

%

\begin{definition} \label{def-co-boundary-exp} Let $X$ be a finite simplicial complex of dimension $d$ and let $i \in \{0, \cdots, d-1 \}$. The $i$-th (coboundary) expansion constant $\epsilon_i$ of $X$ is defined as follows:
$$ \epsilon_i = \mbox{min} \frac{||\delta f ||}{||[f]|| }$$
where the minimum is taken over all $f \in C^{i}(X,\F_2) \backslash B^{i}(X,\F_2)$, where $C^{i}(X,\F_2)$ is the $\F_2$-vector space of all $i$-cochains of $X$, $B^{i}(X,\F_2)$ is the subspace of the coboundaries and $\delta f$ is the coboundary of $f$. The norm $||g||$ of $g \in C^{i}(X,\F_2)$ is the proportion of the $i$-cells on which $g$ does not vanish. The symbol $[f]$ stands for the coset of $f$ modulo $B^{i}(X,\F_2)$ and $||[f]|| = \mbox{min}\{||g|| \big{|} g \in [f]\}$.
\end{definition}

While this definition looks mysterious in first sight, one can check that for $i=d-1=0$, i.e., for graphs, it gives the standard (normalized) edge expansion ("The Cheeger constant") of graphs. Namely, for a graph $X=(V,E):$
$$\epsilon_0(X) = \frac{|V|}{|E|}\mbox{min}_{A \neq \emptyset,V}\frac{|E(A,\bar{A})|}{\mbox{min}\{|A|,\bar{|A|}\}}.$$

See Section~\ref{sec:defs} below for details.

A result proved independently by Meshulam and Wallach~\cite{MW} and by Gromov~\cite{Gromov1} is the following.

\begin{thm}\label{thm:thm2}
Let $X_n^{(d)}$ be the complete $d$-dimensional simplicial complex on $n$ vertices, i.e., the complex of all subsets of $[n] = \{1, \cdots, n\}$ of size at most $d+1$. Then, for every $i=0,\cdots, d-1$, $\epsilon_i \geq 1$.
\end{thm}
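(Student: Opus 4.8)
The plan is to use the classical cone (chain-homotopy) argument, which shows directly that every coset $[f]$ contains a representative whose weight is controlled by the weight of $\delta f$. Fix $i$ with $1 \le i \le d-1$; the case $i=0$ reduces to the edge expansion of the complete graph noted above, so I would dispatch it by the explicit computation there (or uniformly by working with reduced cochains). For a vertex $v \in [n]$ I would introduce the cone operator $C_v : C^{i}(X,\F_2) \to C^{i-1}(X,\F_2)$ defined by $(C_v g)(\tau) = g(v \cup \tau)$ when $v \notin \tau$ and $(C_v g)(\tau)=0$ when $v \in \tau$. This is well defined precisely because $X_n^{(d)}$ is the \emph{complete} complex and $i \le d-1$, so every set $v \cup \tau$ is again a cell. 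A direct check over $\F_2$, splitting into the cases $v \in \sigma$ and $v \notin \sigma$ for an $i$-cell $\sigma$, yields the homotopy identity $\delta C_v + C_v \delta = \mathrm{id}$ on $C^{i}$.

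The first key step is to apply this identity to $f$ itself: $f = \delta(C_v f) + C_v(\delta f)$. Since $C_v f \in C^{i-1}(X,\F_2)$, the term $\delta(C_v f)$ lies in $B^{i}(X,\F_2)$, and therefore $C_v(\delta f)$ is an honest representative of the coset $[f]$. Consequently $\|[f]\|$ is bounded above by $\|C_v(\delta f)\|$ for \emph{every} choice of $v$. The second key step is to count the weight of $C_v(\delta f)$: by definition its support consists of the $i$-cells $\tau$ with $v \notin \tau$ and $(\delta f)(v \cup \tau) \neq 0$, so it is in bijection with the $(i+1)$-cells of $\mathrm{supp}(\delta f)$ that contain $v$. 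Summing over all $n$ vertices, each $(i+1)$-cell of $\mathrm{supp}(\delta f)$ (which has $i+2$ vertices) is counted exactly $i+2$ times, giving $\sum_{v} |\mathrm{supp}(C_v \delta f)| = (i+2)\,|\mathrm{supp}(\delta f)|$. Averaging, some vertex $v$ satisfies $|\mathrm{supp}(C_v \delta f)| \le \tfrac{i+2}{n}\,|\mathrm{supp}(\delta f)|$.

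Finally I would combine these with the normalizations. Writing $\binom{n}{i+1}$ and $\binom{n}{i+2}$ for the numbers of $i$- and $(i+1)$-cells and using $\binom{n}{i+1}/\binom{n}{i+2} = (i+2)/(n-i-1)$, the two steps give
\[
\frac{\|\delta f\|}{\|[f]\|} \;\ge\; \frac{|\mathrm{supp}(\delta f)|/\binom{n}{i+2}}{\tfrac{i+2}{n}\,|\mathrm{supp}(\delta f)|/\binom{n}{i+1}} \;=\; \frac{n}{i+2}\cdot\frac{\binom{n}{i+1}}{\binom{n}{i+2}} \;=\; \frac{n}{n-i-1} \;\ge\; 1,
\]
which establishes $\epsilon_i \ge 1$ (in fact slightly more). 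The routine parts are the $\F_2$ verification of the homotopy identity and the bookkeeping with the binomial ratio. The one genuine idea, and the step I would be most careful to state correctly, is the observation that $C_v(\delta f)$ is a bona fide element of the coset $[f]$ so that $\|[f]\|$ can be bounded from above; the averaging over $v$ then converts the pointwise cone estimate into the crucial factor $(i+2)/n$, and it is the comparison of this factor against the cell-count ratio $\binom{n}{i+2}/\binom{n}{i+1}$ that produces exactly the constant $1$.
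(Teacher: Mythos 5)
Your proof is correct and is essentially the same argument the paper gives: the paper only writes out the case $d=2$, $i=1$, where your cone operator $C_u$ appears as the ``local view'' $\alpha_u$ and your homotopy identity $\delta C_u + C_u\delta = \mathrm{id}$ is exactly its equation $(\alpha - \delta_0\alpha_u)(e) = (\delta_1\alpha)(ue)$, followed by the same double-count over vertices. Your write-up just carries this out uniformly for all $i$ (recovering the sharper bound $n/(n-i-1)$ that the paper also states), so no substantive difference.
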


One of the difficult questions about simplicial complexes is to evaluate their expansion constants (see~\cite{DK} for some results in this direction). Theorem~\ref{thm:thm2} says informally that the complete $d$-dimensional simplicial complex is "an expander". It can be compared with the trivial result that this is the case for $d=1$ i.e., the complete graph is an expander. It is less trivial to show that there are bounded degree expander graphs, though by now various methods are known: random, Kazhdan property $T$, Ramanujan conjecture, the zig-zag product etc. An outstanding open problem is to show that higher dimensional bounded degree expanders exist. For some results in this direction see~\cite{Gromov1,KKL,LubM}.

The goal of this paper is to point out that this notion of high dimensional expansion is also of value and interest to theoretical computer science. 
We show below that the above expansion is intimately related to the area of "Property Testing".

\subsection{Property Testing}
Let us recall first what it means for a property to be testable.

\begin{definition}\label{def-testability} (($q$,$\epsilon$)-testability) Let $A$ be a finite set, $W_n$ a subset of $A^n$ (the $n$-tuples of elements in $A$) and $P_n$ a subset of $W_n$. We say that the membership of $\alpha \in P_n$ (given $\alpha \in W_n$) is {\em testable}, or that $P_n$ is ($q$,$\epsilon$)-testable, if there exist $0 < \epsilon \in \R$, $q \in \N$ and a randomized algorithm, called a tester, which queries only $q$ (independent of $n$) coordinates of $\alpha$ and answers "yes" if $\alpha \in P_n$, while it answers "no" with probability at least $\epsilon \cdot \overline{dist}(\alpha,P_n)$ where $\overline{dist}(\alpha,P_n)$ is the normalized Hamming distance between $\alpha$ and the set $P_n$.
\end{definition}

One of the early works in the area of Property Testing is the work of Blum, Luby and Rubinfeld~\cite{BLR} which dealt with linearity testing (see~\cite{RS} for low degree testing).

{\bf Linearity Testing:} Let $W$ be the space of all functions from $\{0,1\}^m$ to $\{0,1\}$. This space is of dimension $n = 2^m$ over $\F_2$, the field of two elements, and let $W_0$ be the subspace of all linear functions, so the dimension of $W_0$ is $m$.

\begin{thm}~\cite{BLR}\label{thm:thm3}
\begin{itemize}
\item A function $f$ in $W$ is linear iff $f(x+y) = f(x) + f(y)$ for every $x,y \in \{0,1\}^m$.
\item There exists a constant $\epsilon > 0 $ such that for every $f \in W$
$$ \mbox{Prob}(f(x+y) \neq f(x) + f(y) | x,y \in \{0,1\}^m) \geq \frac{\epsilon \cdot dist(f,W_0)}{2^m}$$
where $dist(f,W_0)$  is the number of $x \in \{0,1\}^m$ on which $f$ has to be changed to make it linear (i.e., the Hamming distance between $f$ and $W_0$).
\end{itemize}
\end{thm}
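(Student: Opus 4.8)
The first bullet is a routine algebraic fact, so I would dispose of it first: the forward direction is immediate since a linear $f(x)=\sum_i a_i x_i$ is additive, and for the converse, over $\F_2$ additivity alone forces linearity — setting $x=y=0$ gives $f(0)=0$, and for any $x$ with support $S=\{i : x_i=1\}$ repeated additivity gives $f(x)=\sum_{i\in S} f(e_i)$, so $f$ is linear with coefficients $a_i=f(e_i)$ (homogeneity is automatic since the only scalars are $0,1$). For the second bullet, write $\delta=\Pr_{x,y}[f(x+y)\neq f(x)+f(y)]$ for the rejection probability; the goal is to produce a \emph{linear} $g$ with $\mathrm{dist}(f,g)\le O(\delta)\cdot 2^m$, which immediately gives $\delta\ge\epsilon\cdot\mathrm{dist}(f,W_0)/2^m$. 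The plan is the self-correction method: define $g(x)=\mathrm{maj}_z\,[\,f(x+z)+f(z)\,]$, the more frequent of the two $\F_2$-values as $z$ ranges uniformly, and first show this majority is robust, i.e. $\Pr_z[f(x+z)+f(z)=g(x)]\ge 1-2\delta$ for every fixed $x$.

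The heart of the argument is this robustness step, and it rests on a single algebraic identity. For fixed $x$ and independent uniform $y,z$, the quantity $f(x+y)+f(y)+f(x+z)+f(z)$ equals the sum of the two test expressions $[\,f(x+y)+f(z)+f(x+y+z)\,]$ and $[\,f(y)+f(x+z)+f(x+y+z)\,]$, because the two copies of $f(x+y+z)$ cancel over $\F_2$. Each of these expressions is the test applied to a pair that is \emph{individually} uniform on $\{0,1\}^m\times\{0,1\}^m$ — namely $(x+y,z)$ and $(y,x+z)$ — so each is nonzero with probability exactly $\delta$, and a union bound gives $\Pr_{y,z}[f(x+y)+f(y)\neq f(x+z)+f(z)]\le 2\delta$. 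Comparing two independent samples then forces $\Pr_z[f(x+z)+f(z)=g(x)]\ge 1-2\delta$, since the agreement probability $p^2+(1-p)^2\ge 1-2\delta$ combined with $p\ge\tfrac12$ yields $p\ge 1-2\delta$.

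With robustness in hand, two short steps finish. To bound $\mathrm{dist}(f,g)$, I would note that whenever $g(x)\neq f(x)$ the test $f(x+z)+f(z)=f(x)$ fails on the $(1-2\delta)$-fraction of $z$ achieving the majority; averaging against $\delta=\mathbb{E}_x\Pr_z[f(x+z)+f(z)\neq f(x)]$ yields $\mathrm{dist}(f,g)/2^m\le \delta/(1-2\delta)$. To show $g$ is linear I would verify additivity $g(x+y)=g(x)+g(y)$ for each fixed $x,y$ by the probabilistic method: choosing independent uniform $z,w$, I intersect five events — the three majority identities $g(x)=f(x+z)+f(z)$, $g(y)=f(y+w)+f(w)$, $g(x+y)=f(x+y+z+w)+f(z+w)$ (each failing with probability $\le 2\delta$) and the two tests $f(x+z)+f(y+w)=f(x+y+z+w)$, $f(z)+f(w)=f(z+w)$ (each failing with probability $\le\delta$) — so all hold together with probability $\ge 1-8\delta$. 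When they do, substitution collapses $g(x)+g(y)$ onto $g(x+y)$, and since the statement is deterministic it holds outright once $\delta<1/8$. The main obstacle is precisely engineering these correlated evaluation points so that every individual query is marginally uniform (hence controlled by the hypothesis) while the ensemble telescopes through the cancellation of $f(x+y+z+w)$ and $f(z+w)$.

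Finally I would combine the cases: if $\delta\ge 1/8$ the inequality is trivial for small enough $\epsilon$ because $\mathrm{dist}(f,W_0)/2^m\le 1$, while if $\delta<1/8$ then $g$ is linear and within distance $\delta/(1-2\delta)\le 2\delta$, so $\delta\ge\tfrac12\,\mathrm{dist}(f,W_0)/2^m$; taking $\epsilon$ to be a suitable absolute constant handles both. I would also record the slicker Fourier-analytic route as an alternative: writing $F=(-1)^f$, one has $1-2\delta=\sum_\alpha \hat F(\alpha)^3\le\max_\alpha \hat F(\alpha)$ by Parseval, and since $\hat F(\alpha)=1-2\,\mathrm{dist}(f,\ell_\alpha)/2^m$ for the character $\ell_\alpha(x)=\alpha\cdot x$, this gives $\mathrm{dist}(f,W_0)/2^m\le\delta$ directly, i.e. the bound with $\epsilon=1$.
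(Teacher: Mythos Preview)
Your argument is correct: the first bullet is handled cleanly, and for the second you reproduce the classical self-correction proof of Blum--Luby--Rubinfeld (define $g(x)=\mathrm{maj}_z[f(x{+}z)+f(z)]$, establish the $1-2\delta$ robustness via the telescoping identity, bound $\mathrm{dist}(f,g)$, and verify additivity of $g$ by a union bound over five marginally uniform events). The Fourier alternative you sketch is also sound; the key point that $\sum_\alpha \hat F(\alpha)^3$ is a convex combination of the $\hat F(\alpha)$ with weights $\hat F(\alpha)^2$, hence bounded by $\max_\alpha \hat F(\alpha)$ (no absolute values needed), is exactly what makes the conclusion about \emph{linear} rather than affine distance go through.

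There is, however, nothing in the paper to compare your proof against: Theorem~\ref{thm:thm3} is quoted from~\cite{BLR} and is not proved here. The authors only remark that the first item is trivial (being the definition of linearity) and that the second is nontrivial, citing the constant $\epsilon=2/9$ from~\cite{BLR} and sharper bounds from~\cite{KIWI,KLX}. So your write-up supplies a proof where the paper deliberately defers to the literature; it is essentially the original~\cite{BLR} argument, with the Fourier route being that of~\cite{KIWI}.
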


The first item in Theorem~\ref{thm:thm3} is trivial (as this is the definition of a linear map!) and it also follows from the second item. The second part of Theorem~\ref{thm:thm3} is not trivial. It implies, in particular, that if we want to ensure with probability at least $1-\delta$ that a given $f \in W$ is linear (for some $\delta > 0$ e.g. $\delta = 0.001$), it suffices to check the equation $f(x+y) = f(x) + f(y)$ for a constant number $k_0$ of random inputs $x,y \in \{0,1\}^m$.
The number $k_0$ depends on $\delta$ but is independent of $n$. One can take $k_0 = \mbox{min}\{k | \epsilon^k  < \delta \}$.
In~\cite{BLR} it was proven that $\epsilon$ in Theorem~\ref{thm:thm3} can be taken to be $\epsilon=\frac{2}{9}$, and better estimates have been given~\cite{KIWI,KLX}.

So, Theorem~\ref{thm:thm3} says (when $A=\{0,1\}$, $n=2^m$, $W_n=A^n$ and $P_n$ the $m$-dimensional linear functionals) that the property of linearity is testable.

\subsection{Testability and Expansion}
The main point of this paper is the observation that expansion and testability are intimately connected with each other. The formal way to express it is Theorem~\ref{thm:main}, whose proof is obtained by spelling out carefully the definitions, but it requires the notions to be defined in Section~\ref{sec:defs}. Let us instead illustrate it here by a baby example.

\paragraph{The constant function property.} Let $\Gamma$ be a connected graph on $n$ vertices $[n] = \{1,\cdots,n\}$ and $f:[n] \rightarrow \{0,1\}$ a function on the vertices of $\Gamma$. Is $f$ a constant function?
Let us apply the following $\Gamma$-algorithm: choose a random edge of $\Gamma$ and check whether $f$ agrees on the two end points of the edge. Answer yes if it agrees and no otherwise.

\begin{proposition}\label{prop-constant-function-is-testable} The algorithm is a ($2$,$\epsilon$)-tester for the "constant function" property iff $\Gamma$ is an $\epsilon$-expander graph.
\end{proposition}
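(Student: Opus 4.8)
The plan is to rewrite, in purely graph-theoretic language, the two probabilities that Definition~\ref{def-testability} constrains --- the rejection probability of the algorithm and the normalized Hamming distance of $f$ to the property --- and then to recognize the resulting inequality as the defining inequality of the normalized Cheeger constant $\epsilon_0(\Gamma)$.

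First I would fix notation: $W_n = \{0,1\}^{[n]}$ is the set of all vertex functions, and $P_n = \{\mathbf{0},\mathbf{1}\}$ consists of the two constant functions. For $f \in W_n$ put $A = f^{-1}(1)$ and $\bar A = f^{-1}(0) = [n]\setminus A$. The algorithm reads the two endpoints of a uniformly random edge, so it is a $2$-query algorithm, and when $f$ is constant (that is, $A \in \{\emptyset,[n]\}$) every edge has equal endpoints and the algorithm answers ``yes'' with probability $1$; this establishes the completeness half of Definition~\ref{def-testability} unconditionally.

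Next I would evaluate the soundness quantities for a non-constant $f$. The algorithm rejects exactly when the sampled edge crosses the cut $(A,\bar A)$, so
$$\mbox{Prob}(\mbox{reject}) = \frac{|E(A,\bar A)|}{|E|}.$$
The closest constant function is reached by flipping either all the $1$'s or all the $0$'s, whichever is fewer, so the unnormalized distance is $\min\{|A|,|\bar A|\}$ and hence
$$\overline{dist}(f,P_n) = \frac{\min\{|A|,|\bar A|\}}{n}.$$

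Finally I would combine these. By Definition~\ref{def-testability}, the algorithm is a $(2,\epsilon)$-tester iff for every $A$ with $\emptyset \neq A \neq [n]$ we have $\frac{|E(A,\bar A)|}{|E|} \geq \epsilon\cdot \frac{\min\{|A|,|\bar A|\}}{n}$, equivalently $\frac{n}{|E|}\cdot\frac{|E(A,\bar A)|}{\min\{|A|,|\bar A|\}} \geq \epsilon$ for all such $A$. Taking the minimum of the left-hand side over all nonempty proper $A$ yields precisely $\epsilon_0(\Gamma)$ in the displayed Cheeger formula, so the algorithm is a $(2,\epsilon)$-tester iff $\epsilon_0(\Gamma) \geq \epsilon$, i.e.\ iff $\Gamma$ is an $\epsilon$-expander. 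There is no genuine obstacle here --- the content is entirely in matching the definitions --- but I would flag one point where connectedness is used: a valid tester requires $\epsilon > 0$, and for a disconnected $\Gamma$ a connected component gives a cut with $|E(A,\bar A)| = 0$ but positive distance, forcing $\epsilon_0(\Gamma) = 0$; this is consistent with the equivalence and explains why the interesting (positive-$\epsilon$) regime coincides with $\Gamma$ being connected.
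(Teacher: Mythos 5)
Your proposal is correct and follows essentially the same route as the paper: identify $P_n=\{\mathbf{0},\mathbf{1}\}$, compute the rejection probability as $|E(A,\bar A)|/|E|$ and the normalized distance as $\min\{|A|,|\bar A|\}/n$, and read off the equivalence from the definitions. The extra remark about connectedness and the positivity of $\epsilon$ is a harmless (and sensible) addition not present in the paper's one-line proof.
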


Before providing the proof of the proposition we recall that a graph $\Gamma=(V,E)$ is called an $\epsilon$-expander if for every subset $S$ of $V$
$$ \frac{|E(S,\bar{S})|}{|E|} \geq \epsilon \cdot \frac{min(S,\bar{S})}{|V|}.$$
Where $E(S,\bar{S})$ is the set of edges from $S$ to its complement $\bar{S}$. We are using this definition to take into account also graphs of unbounded degrees. For $k$-regular graphs ($k$ fixed) this is equivalent to the usual definition (up to a change of $\epsilon$).

\begin{proof} Note that the constant function property $P$ contains only two elements: the "all $0$" function and the "all $1$" function. Any function is a characteristic function of some subset $S$, $f=\chi_S$ and $\overline{dist}(f,P)=\frac{min(S,\bar{S})}{|V|}$. Now given such $f$ (and hence $S$) the proportion of edges that cause the tester to reject is exactly $\frac{|E(S,\bar{S})|}{|E|}$. So, the result follows immediately from the definitions.
\end{proof}

Let us end this introduction by recalling that property testing is closely related to locally testable codes (LTCs). The fact that LTCs are related to expander graphs was shown in~\cite{DinurK}. The results proven here show that if $\epsilon_i$, the $i$-coboundary expansion of $X$, is positive then $B^i(X,\F_2)$ is a locally testable code inside $C^i(X,\F_2)$. Unfortunately, as a code, $B^i(X,\F_2)$ has a poor distance, as it contains the image of every $(i-1)$-cell $\triangle$, which are vectors whose support is equal to the $deg(\triangle)=\#\{i\mbox{-cells containing } \triangle\}$. In most cases of interest this is relatively small or even bounded.

\section{Homology and Cohomology}\label{sec:defs}

\subsection{Definitions and basic facts}

In this section we will introduce the homological language needed in this paper. We will use only (co)homology with coefficients in the field $\F_2=\{0,1\}$ of two elements, which makes life easier than the general case as we can ignore orientation.

Let $X$ be a finite simplicial complex, i.e., $X$ is a non-empty collection of subsets of a finite set $V$, called the set of vertices, satisfying $F \in X$ and $G \subseteq F$ implies $G \in X$. In particular, $\emptyset \in X$. For a subset $F \in X$, denote $\mbox{dim}F=|F|-1$. If $\mbox{dim}F = i$ then $F$ is called an {\em $i$-face} (or a face of dimension $i$ or an $i$-cell).

The set of all $i$-faces is denoted $X(i)$, so $X(-1) = \{\emptyset\}$. We say that $X$ is of dimension $d$ if the face of largest size in $X$ is of dimension $d$ (i.e., of size $d+1$). A $1$-cell is called an edge, a $2$-cell a triangle etc.

Let us denote by $C_i = C_i(X,\F_2)$ the $\F_2$-vector space with basis $X(i)$ (or equivalently, the $\F_2$-vector space of subsets of $X(i)$), and $C^i = C^i(X,\F_2)$ the $\F_2$-vector space of functions from $X(i)$ to $\F_2$. It will be convenient sometimes to identify $C_i$ with $C^i$ in the obvious way. One can also think of $C^i$ as the dual of $C_i$.

The boundary map $\partial_i : C_i(X,\F_2) \rightarrow C_{i-1}(X,\F_2)$ is:
\begin{eqnarray}
\partial_i(F) = \sum_{G \subset F, |G|=|F|-1}G,
\end{eqnarray}
where $F \in X(i)$, and the coboundary map $\delta_i: C^i(X,\F_2) \rightarrow C^{i+1}(X,\F_2)$ is:
\begin{eqnarray}
\delta_i(f)(G)=\sum_{F \subset G, |F|=|G|-1}f(F),
\end{eqnarray}
where $f \in C^i$ and $G \in X(i+1)$.

Using the identification between $C^i$ and $C_i$ and defining the bilinear form $C^i \times C^i \rightarrow \F_2$ by:
$$\langle f,g  \rangle= \sum_{F \in X(i)} f(F) g(F).$$
(all in the $\F_2$ arithmetic) we have
\begin{eqnarray}\label{eqnarray-inner-prod}
\mbox{ For $\alpha \in C^i = C_i$ and $\beta \in C_{i+1} = C^{i+1}$:} \langle \alpha, \partial \beta \rangle = \langle\delta \alpha , \beta\rangle.
\end{eqnarray}
Indeed to prove (\ref{eqnarray-inner-prod}), it is sufficient to do it for $\alpha \in X(i)$ and $\beta \in X(i+1)$ and in such a case both sides of  (\ref{eqnarray-inner-prod}) are $1$ if $\alpha \subseteq \beta$ and $0$ otherwise.

Well known and easily calculated equations are:
\begin{eqnarray}\label{eqnarray-double-partial}
\partial_i \circ \partial_{i+1}=0 \mbox{ and } \delta_{i+1} \circ \delta_{i} = 0
\end{eqnarray}
Or if we omit subscripts $\partial^2 =0$ and $\delta^2=0$.

Thus, if we denote:

$B_i = B_i(X,\F_2) = \mbox{Image}(\partial_{i+1})=$\mbox{ the space of $i$-boundaries}.

$Z_i = Z_i(X,\F_2) = \mbox{Ker}(\partial_{i})=$\mbox{ the space of $i$-cycles}.

$B^i = B^i(X,\F_2) = \mbox{Image}(\delta_{i-1})=$\mbox{ the space of $i$-coboundaries}.

$Z^i = Z^i(X,\F_2) = \mbox{Ker}(\delta_{i})=$\mbox{ the space of $i$-cocycles}.

We get from (\ref{eqnarray-double-partial})
\begin{eqnarray}
B_i \subseteq Z_i \subseteq C_i \mbox{ and } B^i \subseteq Z^i \subseteq C^i.
\end{eqnarray}

Define the quotient spaces $H_i(X,\F_2) =Z_i/B_i$ and $H^i(X,\F_2) =Z^i/B^i$, the $i$-homology and the $i$-cohomology groups of $X$ (with coefficients in $\F_2$).

A linear algebra exercise (as $\F_2$ is a field) shows that

$\mbox{dim}H_i(X,\F_2)=\mbox{dim}H^i(X,\F_2)$. Another easy exercise gives that when $X=(V,E)$ is a graph with a set of vertices $V$ and a set of edges $E$ then $\delta_{-1}:C^{-1}(X,\F_2) \rightarrow C^0(X,\F_2)$ sends the one-dimensional space $C^{-1}$ to $B^0 = \{0, \bf{1} \}$ where $\bf{1}$ is the "all" function giving $1$ to every vertex. Furthermore, $\delta_0: C^0 \rightarrow C^1$ sends every vertex to the "star" around it. It follows that a subset $D$ of $V$ is in $\mbox{Ker}\delta_0$ iff $D$ is a union of connected components of the graph. Hence, $\mbox{dim}H^0 = \mbox{dim}Z^0 - \mbox{dim}B^0 = -1 + \mbox{\# connected  components of $X$}$.
Hence:
\begin{eqnarray}
X \mbox{ is connected } \Leftrightarrow \mbox{dim}H^0(X,\F_2) = 0
\end{eqnarray}

Assuming $X$ is connected, an easy computation shows that $\mbox{dim}H^1(X,\F_2) = |E|-|V|-1$. So, $H^1(X,\F_2)=0$ iff $X$ is a tree.

Going now back to general $X$ we can deduce from (\ref{eqnarray-inner-prod}) and (\ref{eqnarray-double-partial}) that
\begin{eqnarray}\label{eq:dual-to-co-boundaries}
B_i^{\bot} = Z^i \mbox{ and } Z_i^{\bot} = B^i.
\end{eqnarray}

Indeed, if $\beta \in B_i$ and $\alpha \in Z^i$ then $\beta = \partial \gamma$ for some $\gamma \in C_{i+1}$ and so
$\langle\beta, \alpha\rangle = \langle\partial \gamma, \alpha\rangle = \langle\gamma, \delta \alpha\rangle = \langle\gamma, 0\rangle = 0$, hence $Z^i \subseteq B_i^{\bot}$.
On the other hand, if $\alpha \in C^i$ and for every $\gamma \in C^{i+1}$, $\langle\partial \gamma, \alpha\rangle=0$, then for every $\gamma \in C^{i+1}$, $\langle\gamma, \delta \alpha\rangle =0$. The bi-linear form is non-degenerate and hence $\delta \alpha =0$ i.e., $\alpha \in Z^i$. In a similar way we deduce also the second equality.

\subsection{$\F_2$-coboundary expansion}

We are now ready to define following~\cite{LM,Gromov1} the expansion of a simplicial complex. Let $X$, as before, be a finite simplicial complex of dimension $d$ and $\alpha \in C^i(X,\F_2)$ for some $i$, $0 \leq i \leq d$.

Denote:
$|\alpha| = \#\{F \in X(i) | \alpha(F) \neq 0 \}$ and $||\alpha|| = \frac{|\alpha|}{|X(i)|}$, i.e., the proportion of the number of $i$-cells on which $\alpha$ does not vanish. Of course, in our case, $\alpha(F) \neq 0$ means $\alpha(F) = 1$, but one may want to think also about this notion over other fields. In the case of $\F_2$, we can also think of $\alpha$ as simply a subset of $X(i)$, and $|\alpha|$ is its order.

For $\alpha \in C^i = C^i(X,\F_2)$ and $W$ a subspace of $C^i$ let:
$$\mbox{dist}(\alpha, W) = \mbox{min}\{|\alpha_0| \mbox{ } | \alpha_0 \in \alpha+W\}.$$

It is easy to see that if the minimum is obtained on $\bar{\alpha}$ in the coset $\alpha+W$ then $\alpha- \bar{\alpha}$ is the closest element to $\alpha$ in $W$ in the Hamming distance and the distance is indeed $|\bar{\alpha}|$. We should remark that this $\bar{\alpha}$ (and hence also $\alpha- \bar{\alpha}$) is not necessarily unique. This is a significant difference between "geometry over $\F_2$" versus "geometry over $\R$".

We also define the normalized distance:
$$\overline{dist}(\alpha,W) = ||\bar{\alpha} || =\frac{\mbox{dist}(\alpha, W)}{|X(i)|}.$$

Let us give now Definition~\ref{def-co-boundary-exp} in a slightly different form which will be more convenient for us.

\begin{definition}($\F_2$-coboundary expansion)\label{def-coboundary expansion}
For $i=0,\cdots,d-1$ denote:
$$\epsilon_i(X) = \mbox{min} \frac{||\delta_{i} f ||}{\overline{dist}(f,B^{i}(X,\F_2))},$$
where the minimum is taken over all the functions $f \in C^{i} \setminus B^{i}$.
\end{definition}

Definition~\ref{def-coboundary expansion} is equivalent to Definition~\ref{def-co-boundary-exp} as $dist(\alpha,W)$ is equal to the minimum Hamming weight among the elements of the coset $\alpha+W$.

\begin{remark} The $\F_2$-coboundary expansion is defined essentially as in~\cite{LM} and~\cite{Gromov1}, but is slightly different from both. In~\cite{LM}, Linial and Meshulam studied the quotients $\mbox{min}\frac{|\delta_i f|}{|[f]|}$, without giving it a name, with a goal to prove that $H^i$ vanishes. This is as our $\epsilon_i$ up to normalization factors. Gromov in~\cite{Gromov1} studied $$\mu_i = \mbox{max}_{0 \neq \beta \in B^{i+1}(X,\F_2)}  (\frac{1}{||\beta||} \cdot {\mbox{min}_{\alpha \in C^i(X,\F_2), \delta \alpha = \beta} ||\alpha||}).$$ It is easy to see that if $H^i=0$ then $\mu_i = \frac{1}{\epsilon_i}$, but it is possible that $\mu_i < \infty$ even if $H^i \neq 0$ and $\epsilon_i = 0$. The name "coboundary expansion" was coined in~\cite{DK}.
\end{remark}

An easy corollary of the definition is the following.

\begin{cor} $\epsilon_i(X) \neq 0 \mbox{ iff } H^{i}(X,\F_2)=0$.
\end{cor}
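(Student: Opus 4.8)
The plan is to unwind the definitions and reduce everything to the single identity $H^i(X,\F_2) = Z^i/B^i$, recalling that $B^i \subseteq Z^i$ always holds (equation~(4)). Thus $H^i(X,\F_2)=0$ is equivalent to $Z^i = B^i$, i.e. to $\mbox{Ker}(\delta_i) = B^i$. I would prove the two directions separately, both by direct inspection of the ratio $\frac{||\delta_i f||}{\overline{dist}(f,B^{i})}$ appearing in Definition~\ref{def-coboundary expansion}, exploiting the fact that the numerator detects membership in $Z^i$ while the denominator detects membership in $B^i$.

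First I would handle one direction in contrapositive form: if $H^i \neq 0$ then $\epsilon_i = 0$. Since $H^i \neq 0$ means $Z^i \supsetneq B^i$, I can choose some $f \in Z^i \setminus B^i$. For this $f$ the numerator $||\delta_i f||$ vanishes, because $f \in \mbox{Ker}(\delta_i)$, while the denominator $\overline{dist}(f,B^{i})$ is strictly positive, because $f \notin B^i$ (the normalized distance to a subspace is zero exactly for elements of that subspace, as the minimizing coset element would then have weight $0$). Hence this particular $f$ contributes a ratio of $0$ to the minimum defining $\epsilon_i$, and since all ratios are non-negative, $\epsilon_i = 0$.

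For the converse I would show $H^i = 0 \Rightarrow \epsilon_i > 0$. When $Z^i = B^i$, every admissible $f$ in the minimum — that is, every $f \in C^i \setminus B^i$ — lies outside $Z^i = B^i$, so $\delta_i f \neq 0$ and the numerator $||\delta_i f||$ is strictly positive; the denominator is again strictly positive since $f \notin B^i$. Thus each individual ratio is a strictly positive rational number.

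The one point requiring care — and the only place the argument is more than a restatement of definitions — is concluding that the \emph{minimum} of these ratios is itself strictly positive, rather than an infimum tending to $0$. Here I would invoke finiteness: $C^i = C^i(X,\F_2)$ is a finite-dimensional vector space over the two-element field $\F_2$, hence a finite set, so $C^i \setminus B^i$ is finite and the minimum is attained over finitely many strictly positive values, giving $\epsilon_i > 0$. (If $C^i = B^i$ the index set is empty; but then $Z^i = B^i$ forces $H^i = 0$, and under the convention $\min \emptyset = +\infty \neq 0$ the stated equivalence still holds.) I expect no genuine obstacle beyond this finiteness observation.
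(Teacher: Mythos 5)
Your argument is correct and is essentially the paper's own proof: one direction exhibits an $f \in Z^i \setminus B^i$ with vanishing numerator, and the other notes that when $H^i=0$ the minimum is taken over finitely many strictly positive rationals. Your extra care about the empty-minimum case and the positivity of the denominator is fine but adds nothing beyond what the paper's one-line proof already contains.
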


\begin{proof} If $H^{i}(X,\F_2) \neq 0$ then there exists $f \in  C^{i} \setminus B^{i}$ with $\delta_{i} f = 0$, hence, $\epsilon_i(X)=0$. On the other hand, if $H^{i}(X,\F_2) = 0$ then $\epsilon_i(X)$ is a minimum over a finite set of rational numbers each of which is non-zero.
\end{proof}

Let us spell out the definition for graphs. Here, $X=(V,E)$ and $d=1$. A function $f \in C^0(X,\F_2)$ is a characteristic function $1_A$ of some subset $A$ of $V$, $B^0(X,\F_2)$ is, as explained above, the $1$-dimensional space of $0$ and $1_V$.
Thus $\overline{dist}(1_A,B^0) = \frac{1}{|V|}\mbox{min}(|A|, |V|\setminus|A|)$. On the other hand, one can easily check that $\delta_0(1_A) = E(A,\bar{A})$, i.e., the characteristic function of the set of all edges from $A$ to its complement.
Hence,
$$\epsilon_0(X) = \mbox{min}_{\emptyset \neq A \varsubsetneq V } \frac{\frac{1}{|E|} |E(A,\bar{A})|}{\frac{1}{|V|} \mbox{min}(|A|, |\bar{A}|)} = \frac{|V|}{|E|}\mbox{min}_{A \neq \emptyset,V}\frac{|E(A,\bar{A})|}{\mbox{min}\{|A|,\bar{|A|}\}}.$$

This gives us (up to the normalized factor $\frac{|V|}{|E|}$) the standard "edge expansion" (known also as the Cheeger constant) which defines expander graphs. Note also that when $X$ is a $k$-regular graph, $k$ fixed, $\frac{|V|}{|E|} = \frac{2}{k}$ is a constant.

\subsection{The expansion of complete complexes}\label{subsec:exp-complete-complexes}

Theorem~\ref{thm:thm2} of the introduction was proved independently by Linial and Meshulam~\cite{LM} for $d=2$, by Meshulam and Wallach~\cite{MW} and Gromov~\cite{Gromov1}, independently, for general $d$. It states:

{\bf Theorem 2} {\em Let $X_n^{(d)}$ be the complete $d$-dimensional simplicial complex on $n$ vertices, i.e., the complex of all subsets of $[n] = \{1, \cdots, n\}$ of size at most $d+1$. Then, for every $i=0,\cdots, d-1$, $\epsilon_i (X) \geq \frac{n}{n-i-1} \geq 1$.}

To get the flavor of this result, let us bring here the proof for $d=2$. This is the case which we mostly use in Section~\ref{sec-exp-and-tst}, so it will make the proofs in the current paper self contained.

Computing $\epsilon_0(X)$ in this case is easy as it depends only of the $1$-skeleton of $X$ and it is (as shown before) the normalized edge expansion, i.e.,

$$ \epsilon_0(X) = \frac{|X(0)|}{|X(1)|}\mbox{min}_{A \neq \emptyset,X(0)}\frac{|E(A,\bar{A})|}{\mbox{min}\{|A|,\bar{|A|}\}}.$$

In our case the minimum is obtained for $|A| = \frac{n}{2}$, hence,

$$ \epsilon_0(X) \geq \frac{n}{{n \choose 2}} \frac{\frac{n}{2} \cdot \frac{n}{2}}{ \frac{n}{2}} = \frac{n}{n-1} \geq 1.$$

Evaluating $\epsilon_1(X)$ is more involved. Let $\alpha \in C^1(X,\F_2)$, so $\alpha$ is a set of edges (or a function on the edges). For an edge $e \in X(1)$ and a vertex $u \in X(0)$ with $u \notin e$, denote by $ue$ the triangle formed by $u$ and $e$. Given $\alpha$, define $\alpha_u \in C^0(X,\F_2)$, "the local view of $\alpha$ from $u$" by:
$$\alpha_u(v) = \alpha((u,v)) \mbox{ if } u \neq v, \mbox{ and } \alpha_u(v) = 0 \mbox{ otherwise }.$$
One can now checks that for every edge $e$:

\begin{eqnarray}\label{eqnarray-LM}
(\alpha-\delta_0 \alpha_u )(e) = (\delta_1 \alpha) (u e) \mbox{ if } u \notin e, (\alpha - \delta_0 \alpha_u )(e) =0 \mbox{ if } u \in e
\end{eqnarray}

It now follows that:
\begin{eqnarray}
3 |\delta_1 \alpha| & = & |\{ (u,T) \in X(0) \times X(2) | u \in T, T \in \delta_1 \alpha\}| \\
                      & = & |\{(u,e) \in  X(0) \times X(1) | e \in \alpha - \delta_0 \alpha _u\}| \\
                      & = &  \sum_{u \in X(0)}|\alpha - \delta_0 \alpha_u| \geq n \cdot \mbox{dist}(\alpha, B^1).
\end{eqnarray}

The first and the third equalities are just a careful spelling out of the notations, while the second follows from (\ref{eqnarray-LM}). The inequality follows from the fact that $\delta_0\alpha_u \in B^1$. We deduce that $\frac{|\delta_1 \alpha|}{\mbox{dist}(\alpha, B^1)} \geq \frac{n}{3}$.
When we normalize by $|X(2)| = {n \choose 3}$ and $|X(1)| = {n \choose 2}$, we get:
$$ \frac{||\delta_1 \alpha ||}{\overline{dist}(\alpha, B^1)} \geq \frac{n}{n-2} \geq 1.$$

%
%

\section{Expansion and Testability}\label{sec-exp-and-tst}
We are ready now to give the main result, which shows that expansion is a form of testability.

\begin{definition} (Cocycle Tester) Let $X$ be a simplicial complex and $f \in C^i(X,\F_2)$. The {\em $i$-cocycle tester} for $f$ is the test that picks a random $(i+1)$-cell $F$ and evaluate $\delta(f)(F)$. The test accepts $f$ iff $\delta(f)(F)=0$.

Note that, the number of queries performed by the $i$-cocycle tester is $i+2$ since every ($i+1$)-cell has exactly $i+2$ sub-cells of dim $i$ (obtained by deleting one of its vertices).
\end{definition}

Our main result is that high-dimensional (coboundary) expansion is equivalent to the local testability of the space of coboundaries. Namely:

\begin{thm}~\label{thm:main}(Main)
Let $X$ be a $d$ dimensional simplicial complex and let $i \in \{0, \cdots, d-1\}$. Then its $i$-th (coboundary) expansion
is $\epsilon_i$ iff the subspace $B^{i}(X,\F_2) \subseteq C^{i}(X,\F_2)$ is $($i+2$,\epsilon_i$)-testable via the "$i$-cocycle tester".
\end{thm}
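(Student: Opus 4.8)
The plan is to unwind both sides of the claimed equivalence into a single inequality and then observe that Definition~\ref{def-coboundary expansion} of coboundary expansion is, verbatim, the soundness guarantee of the cocycle tester. First I would fix the dictionary between Definition~\ref{def-testability} and the cohomological setup of Section~\ref{sec:defs}: take $A = \F_2$, $n = |X(i)|$, the ambient space $W_n = C^i(X,\F_2) \cong \F_2^{X(i)}$, and the property $P_n = B^i(X,\F_2)$. Under this identification an input is a function on the $i$-cells, i.e. an element of $C^i$, the normalized Hamming distance $\overline{dist}(f,P_n)$ appearing in the testing definition is literally the quantity $\overline{dist}(f,B^i)$ defined in Section~\ref{sec:defs}, and the $i$-cocycle tester reads the $i+2$ sub-cells of a random $(i+1)$-cell, so the query count is $q = i+2$ as required.

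Next I would verify perfect completeness. If $f \in B^i = \mbox{Image}(\delta_{i-1})$, write $f = \delta_{i-1} g$; then by $\delta^2 = 0$, equation (\ref{eqnarray-double-partial}), we have $\delta_i f = \delta_i \delta_{i-1} g = 0$, so $\delta(f)(F) = 0$ for every $(i+1)$-cell $F$ and the tester accepts with certainty. This matches the clause of Definition~\ref{def-testability} requiring the tester to answer "yes" whenever $f \in P_n$.

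The heart of the argument is the soundness computation, which is a one-line reading of the definitions. The tester picks $F \in X(i+1)$ uniformly at random and rejects precisely when $\delta(f)(F) = 1$; the number of such $F$ is by definition $|\delta_i f|$, so
$$\mbox{Prob}(\mbox{reject}) = \frac{|\delta_i f|}{|X(i+1)|} = ||\delta_i f||.$$
Consequently the cocycle tester is an $(i+2,\epsilon)$-tester for $B^i$ exactly when $||\delta_i f|| \geq \epsilon \cdot \overline{dist}(f,B^i)$ holds for every $f \in C^i$. For $f \in B^i$ both sides vanish, so that constraint is automatic; the binding constraints come from $f \in C^i \setminus B^i$, and the supremum of admissible $\epsilon$ is
$$\min_{f \in C^i \setminus B^i} \frac{||\delta_i f||}{\overline{dist}(f,B^i)},$$
which is precisely $\epsilon_i(X)$ by Definition~\ref{def-coboundary expansion}. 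Thus the set of valid testing parameters is exactly $(0,\epsilon_i(X)]$, its optimum equals the expansion constant, and the expansion is $\epsilon_i$ if and only if the cocycle tester is an $(i+2,\epsilon_i)$-tester. This settles both directions simultaneously.

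There is no serious mathematical obstacle here: as anticipated in the introduction, the content lies in setting up the correspondence rather than in a deep estimate. The only points that demand care are bookkeeping ones, namely confirming that the two notions of normalized distance to $B^i$ coincide, checking the normalization $||\delta_i f|| = |\delta_i f|/|X(i+1)|$ against the tester's uniform sampling of $(i+1)$-cells, and isolating the trivial case $f \in B^i$ so that the worst-case soundness over all inputs corresponds to the minimum over $f \notin B^i$ in the expansion definition. I expect the rejection-probability identity $\mbox{Prob}(\mbox{reject}) = ||\delta_i f||$, matched against the $\min$ defining $\epsilon_i$, to be the single step where the equivalence becomes transparent.
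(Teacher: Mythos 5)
Your proposal is correct and follows essentially the same route as the paper: both arguments reduce to the observation that the cocycle tester's rejection probability is exactly $\|\delta_i f\|$, so the soundness inequality of Definition~\ref{def-testability} coincides verbatim with the inequality defining $\epsilon_i$ in Definition~\ref{def-coboundary expansion}. Your added checks (perfect completeness via $\delta^2=0$ and the query count $i+2$) are implicit in the paper but do not change the argument.
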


\begin{remark} Proposition~\ref{prop-constant-function-is-testable} is a special case of Theorem~\ref{thm:main} when $d=1$ and $i=0$.
\end{remark}

\begin{proof} Assume first that the $i$-cocycle tester is an $(i+2, \epsilon)$ tester for $B^i(X,\F_2)$ inside $C^i(X,\F_2)$. This means that the probability for a function $f \in C^i(X,\F_2)$ to fail the test is at least $\epsilon$ times its normalized Hamming distance from $B^i(X,\F_2)$. (In particular, it says that if $f$ passes the test with probability $1$, i.e., if $f \in Z^i(X,\F_2)$, then it is in $B^i(X,\F_2)$, hence $B^i(X,\F_2)= Z^i(X,\F_2)$, and so  $H^i(X,\F_2)=0$.)

This means that
$$\frac{\mbox{\#} \{F \in X{(i+1)} | \delta(f)(F) \neq 0\}}{|X{(i+1)}|} \geq \epsilon \cdot \overline{dist}(f, B^i(X,\F_2)).$$
which, by Definition~\ref{def-coboundary expansion}, means exactly that the $i$-coboundary expansion of $X$ is at least $\epsilon$.

Conversely, assume that the $i$-coboundary expansion of $X$ equals $\epsilon$, then for every function $f \in C^i(X,\F_2)\setminus B^i(X,\F_2)$,
$\frac{|| \delta f||}{\overline{dist}(f, B^i(X,\F_2))} \geq \epsilon$, i.e.,
$$\frac{\mbox{\#} \{F \in X{(i+1)} | \delta(f)(F) \neq 0\}}{|X{(i+1)}|} \geq \epsilon \cdot \overline{dist}(f, B^i(X,\F_2)).$$
(see Definition~\ref{def-coboundary expansion}). This exactly means that the probability of $f$ to fail the $i$-cocycle test is at least $\epsilon$ times its (normalized) distance from $B^i(X,\F_2)$. Thus, the $i$-cocycle test is an $(i+2,\epsilon)$-test for $B^i(X,\F_2)$ inside $C^i(X,\F_2)$.
\end{proof}

Theorem~\ref{thm:main} says that any expansion result for some simplicial complex gives automatically a testability result. Let us illustrate this by some applications of Theorem~\ref{thm:thm2}.

\subsection{Sum functions on graphs}
Let $\Gamma$ be a finite graph on a set of vertices $[m]=\{1,\cdots,m\}$, and set of edges $E \subseteq {[m] \choose 2}$ of size $n$. A function $f:E \rightarrow \{0,1\}$ will be called a {\em sum function} if there exists a function $g:[m] \rightarrow \{0,1\}$ such that for every edge $e=(i,j)\in E$, $f(e)=g(i)+g(j) \mbox{ (mod }2)$.

\begin{question} Is the sum-function property testable?
\end{question}

This depends on the structure of the graph. Let us start with a negative result.
\begin{proposition}\label{prop-sum-function-is-not-testable}
Let $\{X_i = (V_i,E_i)\}_{i \in I}$ be a family of finite graphs whose girth is going to infinity. Then, the sum function property on functions $f:E_i \rightarrow \{0,1\}$ is not testable.
\end{proposition}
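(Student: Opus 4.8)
The plan is to show that for graphs of large girth, the sum-function property fails to be testable by exhibiting functions that are far from all sum functions yet are rejected with vanishingly small probability by any constant-query tester. The key observation connecting this to the earlier framework is that sum functions are precisely the coboundaries: a function $f:E_i \to \{0,1\}$ is a sum function iff $f = \delta_0 g$ for some $g \in C^0(X_i,\F_2)$, so the sum-function property is exactly membership in $B^1(X_i,\F_2)$. By Theorem~\ref{thm:main}, testability of $B^1$ via the natural local test is equivalent to $\epsilon_1(X_i)$ being bounded away from $0$. However, a graph is a $1$-dimensional complex, so $X_i(2) = \emptyset$ and there is no natural $1$-cocycle tester. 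The correct move is therefore to argue directly from the definition of testability, showing that \emph{no} $(q,\epsilon)$-tester can exist as $i \to \infty$.

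The core of the argument exploits girth. First I would observe that for a graph of girth $> q$, any set of $q$ edges queried by a tester spans a forest (contains no cycle), and on a forest every $\{0,1\}$-labelling of edges is realizable as a sum function. Concretely, given any function $f:E_i \to \{0,1\}$, its restriction to any acyclic subgraph agrees with some genuine sum function, because one can freely propagate vertex values $g$ along a spanning forest to match $f$ on its edges. Thus any tester that queries at most $q < \mathrm{girth}(X_i)$ edges sees a pattern that is perfectly consistent with a sum function, and so must accept with probability $1$ on its queried edges — it cannot distinguish $f$ from a sum function using only local acyclic views.

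The remaining step is to produce, for each $i$, a function $f_i:E_i \to \{0,1\}$ that is \emph{far} from every sum function, so that a valid tester would be required to reject with probability bounded below by $\epsilon \cdot \overline{dist}(f_i, B^1) > 0$, contradicting the acceptance-with-probability-one conclusion above. The natural candidate is a function $f_i$ supported on a single cycle (or chosen randomly), whose normalized distance to $B^1$ is bounded away from $0$: since $B^1 = Z_1^{\bot}$ by \eqref{eq:dual-to-co-boundaries}, a function is far from $B^1$ iff it has large inner product with many cycles, and a random $f_i$ has constant normalized distance to the linear code $B^1$ with high probability whenever $\dim H^1$ is large — which holds for high-girth graphs since $\dim H^1(X_i,\F_2) = |E_i| - |V_i| + (\text{number of components})$ grows.

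\textbf{Main obstacle.} The delicate point is reconciling the query complexity $q$ of a purported tester, which is a fixed constant, with the girth going to infinity: I must ensure the ``forest implies realizable'' argument applies to the \emph{exact} set of edges a (possibly adaptive and randomized) tester queries, not merely to a fixed small subgraph. Handling adaptivity requires noting that the set of all possible query paths of depth $q$ still touches only $q$ edges, hence remains acyclic once $\mathrm{girth}(X_i) > q$, so every branch of the decision tree accepts; this is the step I expect to require the most care. The other subtlety is quantifying $\overline{dist}(f_i, B^1)$ from below uniformly in $i$, which I would handle via the standard Gilbert–Varshamov-type counting or a direct averaging argument showing a positive fraction of functions lie at constant normalized distance from the low-dimensional-relative-to-ambient code $B^1$.
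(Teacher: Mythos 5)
Your argument is correct, but it proceeds by a genuinely different route than the paper. The paper identifies the sum functions with the linear code $B^1(X_i,\F_2)$ and then invokes the result of Ben-Sasson, Harsha and Raskhodnikova~\cite{BHR} that a locally testable linear code must have its dual spanned by bounded-weight vectors; since $B^1(X_i,\F_2)^{\bot}=Z_1(X_i,\F_2)$ by \eqref{eq:dual-to-co-boundaries} and the minimum weight of the cycle space is the girth, which tends to infinity, no such constant-weight dual constraints exist. You instead give a direct, self-contained impossibility argument: any $q$ queried edges span a forest once the girth exceeds $q$, every edge-labelling of a forest extends to a genuine sum function by propagating vertex values, so a tester with perfect completeness (which the paper's Definition~\ref{def-testability} requires, since it must answer ``yes'' whenever $\alpha\in P_n$) accepts \emph{every} $f$ with probability $1$, including some $f\notin B^1$, contradicting the required rejection probability $\epsilon\cdot\overline{dist}(f,B^1)>0$. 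In effect you reprove the relevant special case of the \cite{BHR} theorem from scratch; your version is more elementary and self-contained, while the paper's is shorter by outsourcing the combinatorial core to a citation. Your handling of adaptivity (every branch of the decision tree touches at most $q$ edges, hence an acyclic set, hence is answer-consistent with some coboundary) is exactly the care the direct route requires.

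One small correction that does not affect validity: you do not need, and cannot always get, a function at \emph{constant} normalized distance from $B^1$. For a family of single long cycles (girth $\to\infty$), $Z_1$ is one-dimensional, $B^1$ has codimension $1$, and every $f\notin B^1$ is at normalized distance $1/|E_i|\to 0$; likewise $\dim H^1$ need not grow, and the indicator of an even cycle can itself lie in $B^1$. But since your tester accepts with probability exactly $1$, any $f\notin B^1$ (e.g.\ the indicator of a single edge lying on a cycle, which pairs nontrivially with that cycle) already gives $0\geq\epsilon\cdot\overline{dist}(f,B^1)>0$, which is the contradiction you need. So the Gilbert--Varshamov-style counting you flag as a remaining obstacle can simply be dropped.
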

\begin{proof} In the notations above, the sum functions is exactly the space $B^1(X_i,\F_2)$. This is a linear space (or a code) and it is well known (see e.g.~\cite{BHR}) that such a code is locally testable only if it is an LDPC (Low Density Parity Check) code. I.e., its dual space is spanned by bounded weight constraints. However, in our case, $B^1(X_i,\F_2)^{\bot}=Z_1(X_i,\F_2)$ (see equation (\ref{eq:dual-to-co-boundaries})). But, $Z_1(X_i,\F_2)$, the space of cycles, has no bounded weight vectors as the girth of $X_i$ is unbounded.
\end{proof}
We now show that for the complete graphs the answer is positive.

\begin{proposition}\label{prop-sum-function-is-testable}
Let $\Gamma = K_m$ the complete graph on $m$ vertices. Then the sum-function property on $\Gamma$ is testable.
\end{proposition}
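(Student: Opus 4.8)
The plan is to realize the sum-function space as the space of $1$-coboundaries of an auxiliary two-dimensional complex, and then to invoke Theorem~\ref{thm:main} together with Theorem~\ref{thm:thm2}. Let $X = X_m^{(2)}$ be the complete $2$-dimensional simplicial complex on the vertex set $[m]$, whose $1$-cells are all the edges of $K_m$ and whose $2$-cells are all the triangles. First I would observe that $C^1(X,\F_2)$ is exactly the space of all functions $f\colon E \to \{0,1\}$, so that here $W_n = C^1$ with $n = |E| = \binom{m}{2}$, and that the sum-function property is precisely $B^1(X,\F_2) = \mathrm{Image}(\delta_0)$: indeed, for $g \in C^0$ one has $\delta_0(g)((i,j)) = g(i)+g(j)$, which is the defining relation of a sum function. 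The key point is that passing from $K_m$ to $X_m^{(2)}$ changes neither $C^1$ nor $B^1$, since both depend only on the $0$- and $1$-cells; it merely supplies the $2$-cells that the tester will use as local constraints.

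With this identification in place, the $1$-cocycle tester for $X$ becomes the following concrete test on a candidate $f$: pick a random triangle $\{i,j,k\}$ of $K_m$, query $f$ on its three edges, and accept iff $f(ij)+f(jk)+f(ik)=0$. This uses $q = i+2 = 3$ queries, a constant independent of $m$. By Theorem~\ref{thm:main} applied with $i=1$, this tester is a $(3,\epsilon_1)$-test for $B^1(X,\F_2)$ inside $C^1(X,\F_2)$, where $\epsilon_1 = \epsilon_1(X_m^{(2)})$ is the $1$-coboundary expansion of $X$. It therefore remains only to bound $\epsilon_1$ below by a positive constant independent of $m$.

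For this I would simply quote Theorem~\ref{thm:thm2} in the sharpened form established in Section~\ref{subsec:exp-complete-complexes}, whose explicit $d=2$ computation gives $\epsilon_1(X_m^{(2)}) \geq \frac{m}{m-2} \geq 1$. Hence the sum-function property on $K_m$ is $(3,1)$-testable, uniformly in $m$, which is exactly the desired conclusion.

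I do not anticipate a genuine obstacle: the whole content lies in the correct encoding of the problem as $1$-coboundary testing on the complete $2$-complex, after which the statement is an immediate corollary of Theorems~\ref{thm:main} and~\ref{thm:thm2}. The only point requiring minor care is checking that the expansion bound is bounded away from $0$ uniformly in $m$, so that a single $\epsilon$ serves the whole family; this is guaranteed by $\epsilon_1 \geq 1$. As a sanity check against Proposition~\ref{prop-sum-function-is-not-testable}, note that the dual code $B^1(X,\F_2)^{\bot} = Z_1(X,\F_2)$ is spanned by the triangles of $K_m$, i.e.\ by weight-$3$ vectors, so $B^1$ is an LDPC code---precisely the local structure that the large-girth graphs lacked.
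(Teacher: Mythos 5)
Your proposal is correct and follows essentially the same route as the paper: embed $K_m$ as the $1$-skeleton of the complete $2$-complex $X_m^{(2)}$, identify the sum functions with $B^1(X,\F_2)$, and combine Theorem~\ref{thm:thm2} ($\epsilon_1 \geq 1$) with Theorem~\ref{thm:main} to conclude that the triangle test is a $(3,1)$-tester. Your closing remark contrasting this with Proposition~\ref{prop-sum-function-is-not-testable} via the LDPC structure of $Z_1$ is a nice sanity check not present in the paper's proof, but the core argument is identical.
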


\begin{proof}
We embed $\Gamma = K_m$ as the $1$-skeleton of $X = K_m^{(2)}$ , the $2$-dimensional complete complex on $m$ vertices (i.e., the set of all subsets of $[m]$ of cardinality at most $3$). Note that the space of sum functions is exactly $B^1(X,\F_2)$ - the space of coboundaries, as can be easily seen by spelling out the definitions. By Theorem~\ref{thm:thm2}, $\epsilon_1(K_m^{(2)}) \geq 1$. This means by Theorem~\ref{thm:main}, that the $1$-cocycle tester is a $(3,1)$-tester for $B^1(X,\F_2)$ inside $C^1(X,\F_2)$. The meaning of the $1$-cocycle tester is: choose a random tuple $\{r,j,k\} \in {[m] \choose 3 }$, accept a function $f$ on the edges of $\Gamma = K_m$ (i.e. on $X(1)$) iff $f(rj)+f(jk)+f(kr) = 0$. Indeed, it performs $3$ queries and the fact that it is a $(3,1)$-tester for $B^1(X,\F_2)$ exactly means that the probability of the test to reject $f$ is at least its (normalized) distance from $B^1(X,\F_2)$, i.e., from the sum-functions. This proves the proposition.

\end{proof}

\begin{remark} Note that the proposition says in particular that a function on the edges of $\Gamma= K_m$ is a sum function iff it vanishes on triangles. This could be proved directly, but a conceptual way to see this is the following:
$X= K_m^{(2)}$ is a triangulation of a bouquet of $b$ two-dimensional spheres $Y=\bigvee_{i=1}^{b}S^2$, $S^2 =\{(x,y,z) \in \R^3 | x^2+y^2+z^2 = 1\}$ . It has, therefore, the same cohomology/homology as $S^2$. 
Now, $H^1(S^2,\F) =0$ for every field $\F$, and, in particular, $H^1(S^2,\F_2) =0$. Hence also $H^1(Y,\F_2)=\bigoplus_{i=1}^b H^1(S^2,\F_2) =0$, and so $H^1(X,\F_2)=0$. This exactly means that every $1$-cocycle of $X$ is a coboundary. In particular, a function on the edges of $\Gamma= K_m$ is a sum function iff it vanishes on all triangles.
\end{remark}

%
%

The proof of Proposition~\ref{prop-sum-function-is-testable} shows.

\begin{proposition} Let $X$ be a two dimensional simplicial complex and $\Gamma$ its $1$-skeleton. Then the "triangle checking algorithm" of $X$ is a $($3$,\epsilon$)-tester for the sum-function problem of $\Gamma$ if and only if the first expansion constant of $X$ is at least $\epsilon$.
\end{proposition}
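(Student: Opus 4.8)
The plan is to read this proposition as a direct instance of Theorem~\ref{thm:main} specialized to $i=1$, once two dictionary translations are made; essentially no new mathematical content is needed beyond the main theorem, and the work is purely in matching objects and converting the exact-value equivalence of Theorem~\ref{thm:main} into the threshold form stated here. First I would identify the space of sum-functions on $\Gamma$ with the coboundary space $B^1(X,\F_2)$. Since $\Gamma$ is the $1$-skeleton of the $2$-dimensional complex $X$, we have $X(0)=[m]$ and $X(1)=E$, so $C^0(X,\F_2)$ is exactly the space of vertex functions $g\colon [m]\to\{0,1\}$. By the definition of the coboundary map, $\delta_0(g)(e)=\sum_{v\in e}g(v)=g(i)+g(j)$ over $\F_2$ for an edge $e=(i,j)$, so $f$ is a sum-function precisely when $f\in\mathrm{Image}(\delta_0)=B^1(X,\F_2)$. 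This is the same identification already used in the proof of Proposition~\ref{prop-sum-function-is-testable}.

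Second I would identify the triangle-checking algorithm with the $1$-cocycle tester. A $2$-cell of $X$ is a triangle $T=\{r,j,k\}$, and $\delta_1(f)(T)=f(rj)+f(jk)+f(kr)$. Thus the $1$-cocycle tester---pick a uniformly random $2$-cell $T$ and accept iff $\delta_1(f)(T)=0$---is precisely the triangle-checking algorithm, and it performs $i+2=3$ queries. With these two translations in hand, I would invoke Theorem~\ref{thm:main}. The only subtlety is that the main theorem is phrased as an equality-to-equality equivalence, whereas here we want the ``$\geq\epsilon$'' version, so I would unwind both sides into the single inequality
$$||\delta_1 f||\ \geq\ \epsilon\cdot\overline{dist}(f,B^1(X,\F_2))\qquad\text{for all } f\in C^1(X,\F_2),$$
noting that the rejection probability of the triangle-checking tester on $f$ is exactly $\#\{T\in X(2):\delta_1(f)(T)\neq 0\}/|X(2)|=||\delta_1 f||$. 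The first expansion constant $\epsilon_1(X)$ is by definition the minimum of $||\delta_1 f||/\overline{dist}(f,B^1)$ over $f\in C^1\setminus B^1$, so $\epsilon_1(X)\geq\epsilon$ holds iff the displayed inequality holds for every such $f$; and by Definition~\ref{def-testability} the tester is a $(3,\epsilon)$-tester iff that same inequality holds. Since the two statements coincide, both directions of the ``iff'' follow at once.

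The step requiring the most care is this last one, though it is bookkeeping rather than a genuine obstacle: one must check that the normalizations hidden in $||\delta_1 f||$ (division by $|X(2)|$) and in $\overline{dist}(f,B^1)$ (division by $|X(1)|$) are handled consistently so that the monotone reformulation is valid in \emph{both} directions simultaneously, and one should dispose of the boundary case $f\in B^1$ separately, where the rejection probability and the distance both vanish and the inequality holds vacuously.
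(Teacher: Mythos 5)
Your proposal is correct and follows essentially the same route as the paper: the paper explicitly states that this proposition is obtained from the proof of Proposition~\ref{prop-sum-function-is-testable}, i.e., by identifying sum-functions with $B^1(X,\F_2)$, identifying the triangle-checking algorithm with the $1$-cocycle tester, and invoking Theorem~\ref{thm:main}. Your extra care in unwinding both sides into the common inequality $||\delta_1 f|| \geq \epsilon\cdot\overline{dist}(f,B^1)$ is exactly the content of the proof of Theorem~\ref{thm:main} and is handled correctly.
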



%
%


\subsection{Tensor power testing}\label{sub-sec-tensor-power-testing}
Proposition~\ref{prop-sum-function-is-testable} gives also a testability result of a different form (We are grateful to Irit Dinur for calling our attention to this fact).


Let $A=\{1,-1\}$ and $A^{\frac{m^2-m}{2}}$ is the set of all $m \times m$ symmetric matrices with $1$ along the diagonal and $+1/-1$ outside the diagonal. Let $P_m$ be the subset of all matrices $M$ obtained as a tensor power, i.e., there exists a vector $\alpha$ of length $m$ with $+1/-1$ entries, such that $M_{i,j} =\alpha_i \cdot \alpha_j$.

\begin{proposition} $P_m$ is testable within $A^{\frac{m^2-m}{2}}$ via the algorithm: choose three different indices $\{i,j,k\}$, answer "yes" if
$M_{i,j}M_{j,k}M_{k,i} =1$ and "no" otherwise.
\end{proposition}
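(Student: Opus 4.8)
The plan is to reduce the multiplicative $\{1,-1\}$-picture to the additive $\F_2$-picture of Proposition~\ref{prop-sum-function-is-testable} via the standard group isomorphism $(\{1,-1\},\cdot)\cong(\F_2,+)$. Concretely, I would apply the bijection $\phi:\{1,-1\}\to\{0,1\}$ with $\phi(1)=0$ and $\phi(-1)=1$, which satisfies $\phi(ab)=\phi(a)+\phi(b)\pmod 2$. First I would encode a symmetric matrix $M\in A^{\frac{m^2-m}{2}}$ (determined by its strictly-upper-triangular entries $M_{i,j}$ for $i<j$) as a function $f$ on the edges of the complete graph $\Gamma=K_m$ by setting $f(\{i,j\})=\phi(M_{i,j})$. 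Under this correspondence, $A^{\frac{m^2-m}{2}}$ is identified with $C^1(X,\F_2)$ where $X=K_m^{(2)}$, and the testing algorithm's acceptance condition translates directly.

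The key step is to match the two properties and the two test predicates under $\phi$. On the one hand, $M\in P_m$ means $M_{i,j}=\alpha_i\alpha_j$ for some $\alpha\in\{1,-1\}^m$; applying $\phi$ and setting $g(i)=\phi(\alpha_i)$ gives $f(\{i,j\})=g(i)+g(j)\pmod 2$, which is precisely the sum-function property, i.e.\ $f\in B^1(X,\F_2)$. Conversely every sum function lifts back to a tensor power by taking $\alpha_i=\phi^{-1}(g(i))$, so $\phi$ carries $P_m$ bijectively onto the sum functions. On the other hand, the algorithm answers ``yes'' iff $M_{i,j}M_{j,k}M_{k,i}=1$, and applying $\phi$ turns this product into the sum $f(\{i,j\})+f(\{j,k\})+f(\{k,i\})=0\pmod 2$; but this is exactly the acceptance condition of the $1$-cocycle tester (the ``triangle checking algorithm'') on the triple $\{i,j,k\}$. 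Since both the uniform choice of three distinct indices and the resulting predicate coincide on the two sides, the two testers are literally the same algorithm transported along $\phi$.

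Having established this dictionary, the result follows immediately from Proposition~\ref{prop-sum-function-is-testable}: the triangle-checking algorithm is a $(3,1)$-tester for $B^1(X,\F_2)$ inside $C^1(X,\F_2)$, and because $\phi$ is a bijection it preserves Hamming distance, so the normalized distance of $M$ to $P_m$ equals the normalized distance of $f$ to $B^1(X,\F_2)$. Thus the rejection probability on $M$ is at least its normalized distance to $P_m$, which is precisely $(3,1)$-testability of $P_m$.

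I do not expect a genuine obstacle here, since the statement is essentially a relabeling of Proposition~\ref{prop-sum-function-is-testable}. The only point requiring a little care is verifying that $\phi$ is distance-preserving at the level of matrices versus edge-functions (so that the testability parameter $\epsilon=1$ transfers without loss), and making sure the indexing conventions for symmetric matrices with fixed diagonal line up exactly with edges of $K_m$; both are routine once the isomorphism $\phi$ is fixed.
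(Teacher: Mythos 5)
Your proposal is correct and takes exactly the same route as the paper: the paper's own proof is a one-line remark that the statement is a reformulation of Proposition~\ref{prop-sum-function-is-testable} under the switch of notation from $\{0,1\}$ to $\{1,-1\}$, which is precisely the isomorphism $\phi$ you spell out. Your version just makes explicit the dictionary (tensor powers $\leftrightarrow$ sum functions, triangle product test $\leftrightarrow$ $1$-cocycle tester, distance preservation) that the paper leaves implicit.
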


\begin{proof} This is just a reformulation of Proposition~\ref{prop-sum-function-is-testable} when one switch notations from $\{0,1\}$ to $\{1,-1\}$. With these notations the tensor powers are exactly the sum-functions and $A^{\frac{m^2-m}{2}}$ is the space of all symmetric functions on $[m] \times [m]$ with zeros along the diagonal.
\end{proof}

%

\subsection{Seidel Switching}

There is another interpretation for Proposition~\ref{prop-sum-function-is-testable}.
Given a graph $\Gamma$ with a set of vertices $[n]=\{1,\cdots,n\}$ and let $v$ be a vertex of $\Gamma$. One defines the {\em Seidel switching} of $\Gamma$ along $v$ to be the graph obtained from $\Gamma$ by deleting all the edges of $\Gamma$ incident to $v$ and connecting $v$ to all vertices of $[n]$ which were not its neighbors in $\Gamma$. We say that the graph $\Gamma'$ on $[n]$ is {\em Seidel equivalent} to $\Gamma$ if it can be obtained from $\Gamma$ by a sequence of such Seidel switchings. This is indeed an equivalence relation. The concept of Seidel switching appears in various areas of combinatorics and computer science. See for example~\cite{KNZ} and the references therein.

We can now prove the following

\begin{proposition} The property of Seidel equivalence of a pair of graphs is testable. I.e., given two graphs $\Gamma$,$\Gamma'$ on $[n]$, there is a tester for the property that the pair of graphs are Seidel equivalent.
\end{proposition}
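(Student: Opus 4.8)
The plan is to reduce Seidel equivalence to the coboundary (sum-function) testing already established for the complete graph. Encode a graph $\Gamma$ on $[n]$ by its edge-indicator $f_\Gamma \in C^1(K_n^{(2)},\F_2)$, where $K_n^{(2)}$ is the $2$-dimensional complete complex on $[n]$ and $f_\Gamma(\{i,j\})=1$ iff $ij$ is an edge of $\Gamma$. First I would observe that Seidel switching along a single vertex $v$ amounts, over $\F_2$, to adding to $f_\Gamma$ the indicator of the star of $v$, i.e.\ the function that is $1$ on exactly the pairs containing $v$. But this star function is precisely $\delta_0(1_{\{v\}})$, since $\delta_0(1_{\{v\}})(\{i,j\}) = 1_{\{v\}}(i)+1_{\{v\}}(j)$ equals $1$ iff exactly one of $i,j$ is $v$. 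Consequently switching along a set $S$ of vertices adds $\sum_{v\in S}\delta_0(1_{\{v\}}) = \delta_0(1_S)$, and as $S$ ranges over all subsets of $[n]$ these exhaust $B^1(K_n^{(2)},\F_2) = \mathrm{Image}(\delta_0)$. Hence the set of graphs Seidel equivalent to $\Gamma$ is exactly the coset $f_\Gamma + B^1$, and a pair $(\Gamma,\Gamma')$ is Seidel equivalent iff $h := f_\Gamma + f_{\Gamma'} \in B^1(K_n^{(2)},\F_2)$.

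This turns the problem into testing membership in $B^1$ for the single function $h$, which is exactly the sum-function property on $K_n$ handled in Proposition~\ref{prop-sum-function-is-testable}. The tester I would use is the triangle test: pick a uniformly random triple $\{i,j,k\}$ and accept iff $h(ij)+h(jk)+h(ki)=0$. The one twist is that $h$ is not literally part of the input; each value $h(e)=f_\Gamma(e)+f_{\Gamma'}(e)$ is recovered from one query to $\Gamma$ and one to $\Gamma'$, so the test reads $6$ coordinates of the pair $(\Gamma,\Gamma')$ — a constant independent of $n$. By Theorem~\ref{thm:thm2} (the case $d=2$, $i=1$, giving $\epsilon_1(K_n^{(2)})\ge 1$), the probability this test rejects, namely $||\delta_1 h||$, is at least $\overline{dist}(h,B^1)$.

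The step needing care is matching distances. I would show that the minimal number of edge modifications to $(\Gamma,\Gamma')$ that produces a Seidel-equivalent pair equals $\mathrm{dist}(h,B^1)$: toggling an edge set $T_1$ in $\Gamma$ and $T_2$ in $\Gamma'$ replaces $h$ by $h+1_{T_1\triangle T_2}$, so the equivalence condition becomes $1_{T_1\triangle T_2}\in h+B^1$, while the cost $|T_1|+|T_2|\ge |T_1\triangle T_2|$ is minimized, with equality, by concentrating all toggles in one graph; minimizing over the coset then gives $\mathrm{dist}(h,B^1)$. Since the input pair has $2\binom{n}{2}$ coordinates whereas $h$ lives on $\binom{n}{2}$ edges, the normalized distance of the pair is $\overline{dist}_{\mathrm{pair}} = \mathrm{dist}(h,B^1)/(2\binom{n}{2}) = \tfrac12\,\overline{dist}(h,B^1)$. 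Combining with the previous paragraph, the rejection probability is at least $\overline{dist}(h,B^1)=2\,\overline{dist}_{\mathrm{pair}}\ge \overline{dist}_{\mathrm{pair}}$, so the algorithm is a $(6,1)$-tester and Seidel equivalence is testable.

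The main obstacle is exactly this soundness bookkeeping: one must confirm that modifying a \emph{pair} of graphs — rather than a single function — incurs optimal cost $\mathrm{dist}(h,B^1)$ (the best strategy concentrates all changes in one graph), and then carry the resulting factor of two between $\overline{dist}_{\mathrm{pair}}$ and $\overline{dist}(h,B^1)$ correctly. The identification of switching patterns with $B^1$ and the appeal to Theorem~\ref{thm:thm2} through Theorem~\ref{thm:main} are then routine.
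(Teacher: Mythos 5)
Your proposal is correct and follows essentially the same route as the paper: identify Seidel switching along $v$ with adding $\delta_0(\chi_v)$, reduce equivalence of the pair to membership of $f_\Gamma+f_{\Gamma'}$ in $B^1(K_n^{(2)},\F_2)$, and invoke the triangle test via Theorem~\ref{thm:thm2} and Theorem~\ref{thm:main}. Your additional bookkeeping (that the optimal edit cost for the pair equals $\mathrm{dist}(h,B^1)$ and the resulting factor of two in normalization) is a correct refinement of a point the paper leaves implicit.
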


\begin{proof} A graph $\Gamma$ (resp $\Gamma'$) on $[n]$ can be considered as a symmetric function from the set of $2$-elements subsets of $[n]$ to $\{0,1\}$, i.e., as a function on the edges of the complete graph $K_n$, so it is a $1$-cochain of $K_n$. Now, one can see that if $\alpha$ is the cochain associated to $\Gamma$, then $\alpha+\delta_0(\chi_v)$ is the cochain associated with the Seidel switching of $\Gamma$ along $v$ (where $\chi_v$ is the characteristic function of $\{v\}$ and $\delta_0$ the coboundary map).

As $B^1(K_n, \F_2)$ is generated by $\delta_0(\chi_v)$, $v \in [n]$, it follows that $\Gamma'$ is Seidel equivalent to $\Gamma$ iff $\alpha'$ , the cochain associated with $\Gamma'$, is in the same coset modulo $B^1(K_n, \F_2)$ as $\alpha$, i.e., iff $\alpha-\alpha' \in B^1(K_n, \F_2)$.
Theorem~\ref{thm:thm2} implies, as explained in the proof of Theorem~\ref{thm:main} and Proposition~\ref{prop-sum-function-is-testable}, that the question of whether a cochain of $K_{n}$ is a coboundary, is testable. Applying this to $\alpha - \alpha'$ we deduce that "Seidel equivalence" is also testable. In fact the tester acts as follows. "Pick a random triangle and check whether $\delta_1(\alpha)$ and $\delta_1(\alpha')$ agree on it".
\end{proof}

Let us warn the reader that the last proposition says that the Seidel equivalence of graphs $\Gamma$ and $\Gamma'$ is testable when we consider them as labeled on the vertices $[n]$, and the Seidel switching preserves the label of the vertices. We do not expect the abstract Seidel switching to be testable since the decision version of that problem is equivalent to the problem of graph isomorphism~\cite{KNZ}.

\end{document}